\documentclass[11pt]{article}

\usepackage[margin=1in]{geometry}
\usepackage{microtype}
\usepackage{xcolor}

\usepackage{enumitem}
\usepackage{graphicx}
\usepackage{fancybox}
\usepackage{comment}
\usepackage{xcolor}
\usepackage{nameref}
\usepackage{bbm}

\definecolor{ForestGreen}{rgb}{0.1333,0.5451,0.1333}
\definecolor{DarkRed}{rgb}{0.8,0,0}
\definecolor{Red}{rgb}{1,0,0}
\usepackage[linktocpage=true,
colorlinks,
linkcolor=ForestGreen,citecolor=ForestGreen,
bookmarks,bookmarksopen,bookmarksnumbered]
{hyperref}

\usepackage{amsmath}
\usepackage{amssymb}
\usepackage{amsthm}
\usepackage{mathtools}

\usepackage{subfig}

\usepackage{thm-restate}
\usepackage[ruled, noend, linesnumbered]{algorithm2e}

\usepackage{float}
\usepackage{cleveref}

\newcommand\vecone{\boldsymbol{1}}

\newcommand\R{\mathbb{R}}

\DeclareMathOperator*{\im}{im}

\newcommand{\eps}{\varepsilon}

\newcommand{\bDelta}{\boldsymbol{\Delta}}

\renewcommand{\hat}{\widehat}
\renewcommand{\tilde}{\widetilde}

\newcommand{\Otil}{\tilde{O}}

\renewcommand{\epsilon}{\ensuremath\varepsilon}

\newcommand\MM{\boldsymbol{\mathrm{{M}}}}

\newcommand\LL{\boldsymbol{\mathrm{{L}}}}

\renewcommand\aa{\boldsymbol{\mathrm{a}}}
\newcommand\bb{\boldsymbol{\mathrm{b}}}

\newcommand\ww{\boldsymbol{\mathrm{w}}}
\newcommand\xx{\boldsymbol{\mathrm{x}}}

\SetKw{Break}{break}
\SetKwComment{Comment}{$\triangleright$\ }{}
\SetCommentSty{mycommfont}

\newtheorem{theorem}{Theorem}[section]

\newtheorem{observation}[theorem]{Observation}

\newtheorem{claim}[theorem]{Claim}
\newtheorem{fact}[theorem]{Fact}

\newtheorem*{theorem*}{Theorem}

\newtheorem*{corollary*}{Corollary}
\newtheorem*{conjecture*}{Conjecture}
\newtheorem*{lemma*}{Lemma}
\newtheorem*{thm*}{Theorem}
\newtheorem*{prop*}{Proposition}
\newtheorem*{obs*}{Observation}
\newtheorem*{definition*}{Definition}

\newtheorem*{remark*}{Remark}
\newtheorem*{rec*}{Recommendation}

\usepackage[backend=biber, isbn=false, style=alphabetic, backref=true, doi=true, url=false, eprint=false, maxcitenames=10, mincitenames=3, maxbibnames=10, minbibnames=6, minalphanames=3, maxalphanames=5, defernumbers=true]{biblatex}

\usepackage{makecell}
\usepackage{booktabs}
\usepackage{array}
\usepackage[bottom]{footmisc}
\usepackage{tablefootnote}
 \newtheorem{invariant}[theorem]{Invariant}
\title{Partially-Dynamic All-Pairs Maxflow and Effective Resistance\\ via Stable Sparsifiers}
\author{
  Gramoz Goranci \\
  \small University of Vienna \\
  \small \texttt{gramoz.goranci@univie.ac.at}
  \and
  Rasmus Kyng\thanks{The research leading to these results has received funding from the starting grant “A New Paradigm for Flow and Cut Algorithms” (no. TMSGI2 218022) and grant no. 200021 204787 of the Swiss National Science Foundation.} \\
  \small ETH Zurich \\
  \small \texttt{kyng@inf.ethz.ch}
  \and
  Maximilian Probst Gutenberg\footnotemark[1] \\
  \small ETH Zurich \\
  \small \texttt{maximilian.probst@inf.ethz.ch}
  \and
  Yibin Zhao \\
  \small University of Toronto \\
  \small \texttt{ybzhao@cs.toronto.edu}
  \and
  Gernot Z\"ocklein\footnotemark[1] \\
  \small ETH Zurich \\
  \small \texttt{gernot.zoecklein@inf.ethz.ch}
}
\date{}

\begin{document}
\thispagestyle{empty}
\maketitle

\begin{abstract}
We give a randomized data structure for undirected weighted graphs that are partially dynamic, i.e., that undergo either only edge insertions or only edge deletions. The data structure maintains $(1\pm\eps)$-approximations to the maxflow value and effective resistance between any queried pair of vertices, with total update time $\widetilde{O}_{\eps}(n^2)$ and worst-case query time $\widetilde{O}_{\eps}(1)$.\footnote{We use $\tilde{O}(\cdot)$ notation to hide polylogarithmic factors in $n$ and $\tilde{O}_{\eps}(\cdot)$ notation to additionally suppress polynomial factors in $\eps^{-1}$. We let $m$ denote the maximum number of edges in the dynamic input graph at any time, and $n$ the number of vertices, and assume $m = \tilde{O}(n^2)$ polynomially-bounded weights throughout.} Thus, for dense graphs where $m = \Omega(n^2)$, our guarantees are near-optimal. Our algorithms succeed with high probability against an adaptive adversary.

Our result follows from a simple stability principle for partially dynamic graphs. We show how to partition an online sequence of $m$ updates into $\widetilde{O}(n/\eps)$ epochs such that every graph within an epoch is a $(1\pm O(\eps))$-spectral approximation of the graph at the beginning of the epoch. The epochs are determined by the cumulative leverage score of the updated edges: small leverage-score mass implies small spectral change, while the total leverage-score mass over a monotone update sequence is $\widetilde{O}(n)$. Consequently, a spectral sparsifier needs to be recomputed only once per epoch. Applying known static all-pairs maxflow and effective-resistance oracles to these sparsifiers then yields the result.
\end{abstract}

\pagebreak
\addtocounter{page}{-1}
\section{Introduction}

Maximum flow and effective resistance are two fundamental measures of
connectivity in a graph. The $st$-maxflow value measures how much flow the
network can route between two vertices $s$ and $t$, whereas the
$st$-effective resistance measures how well the network conducts between them.

\paragraph{Static algorithms and all-pairs oracles.}
In static graphs, both quantities are by now well understood. Exact $st$-maxflow can be computed in almost-linear $m^{1+o(1)}$ time \cite{chen2022maximum,van2023deterministic}, while a
$(1\pm\eps)$-approximation requires
$\widetilde{O}_{\eps}(m)$ time
\cite{peng2016approximate,fleischmann_et_al:LIPIcs.ICALP.2026.91}.
Similarly, a $(1\pm\eps)$-approximation to $st$-effective resistance can be
computed in near-linear time
\cite{spielman2014nearly,cohen2014solving,kyng2016approximate}.

Recent work has also studied \emph{all-pairs oracles}: data structures that,
after preprocessing, answer a query for any pair $s,t\in V$ in
$\widetilde{O}_{\eps}(1)$ time. Such an implicit representation is necessary
because explicitly reporting all pairwise values takes $\Omega(n^2)$ time.
For maxflow, exact oracles with $m^{1+o(1)}$ preprocessing time and $\widetilde{O}(1)$ query time are known
\cite{AbboudLPS23,AbboudKLPGSYY25}, as are $(1\pm\eps)$-approximate oracles with $\widetilde{O}_{\eps}(m)$ preprocessing time \cite{li2021approximate,li2023near}. For effective resistance, Spielman and Srivastava \cite{SpielmanS08} gave an elegant reduction to $\widetilde{O}{\eps}(1)$ Laplacian solves, yielding $\widetilde{O}{\eps}(m)$ preprocessing and
$\widetilde{O}_{\eps}(1)$ query time. Thus, in the static setting, both all-pairs problems admit essentially optimal preprocessing and query times.

\paragraph{Dynamic algorithms.} This raises the corresponding question for dynamic graphs:
\begin{center}    
\emph{Can all-pairs maxflow and effective resistance oracles be efficiently maintained\\ in graphs undergoing updates?}
\end{center}
However, the corresponding dynamic problem is conditionally much harder when exact answers are required. Assuming the OMv conjecture, Dahlgaard \cite{dahlgaard:LIPIcs.ICALP.2016.48} showed that exact $st$-maxflow in weighted undirected graphs cannot be maintained with $O(n^{1-\delta})$ amortized time per operation, for any constant $\delta>0$. This lower bound holds even for a fixed pair $s,t$ and even when the graph is only partially-dynamic, i.e. only undergoes edge insertions or deletions. One notable positive result is the algorithm of \cite{goranci2023efficient} for exact $st$-maxflow in unweighted graphs undergoing edge insertions, which obtains $\widetilde{O}(n^{5/2})$ total update time. Beyond this restricted setting, we are not aware of an exact dynamic algorithm that improves on solving the flow problem from scratch after every update.

It is therefore natural to allow for $(1\pm\eps)$-approximations to circumvent these strong lower bounds and to work with undirected graphs. A baseline solution follows from the implicit dynamic sparsifier in \cite{bernstein_et_al:LIPIcs.ICALP.2022.20} which yields $\tilde{O}_{\eps}(n)$ amortized update time. This is achieved by implicitly sampling a $(1\pm\eps)$-sparsifier consisting of $\tilde{O}_\eps(n)$ edges in $\tilde{O}_\eps(n)$ time and then running the above-mentioned static algorithms on the sparsifier. This yields algorithms to maintain oracles for both $(1\pm\eps)$-approximate all-pairs maxflow and all-pairs effective resistances.

In the most general, fully-dynamic setting, \cite{chen2020fast, goranci2021expander, van2024almost} culminated in a {deterministic}\footnote{Note that deterministic algorithms always work against an adaptive adversary.} $m^{o(1)}$-approximate all-pairs maxflow with $m^{1+o(1)}$ total update and $m^{o(1)}$ query time. \cite{chen2020fast} further gives an algorithm to maintain $(1\pm\eps)$-approximate all-pairs effective resistances with $\tilde{O}_{\eps}(mn^{2/3})$ total update time, breaking the above-stated $\Otil_{\epsilon}(mn)$ barrier, however, at the cost of a high query time of $\tilde{O}_{\eps}(n^{2/3})$. It also works only against an oblivious adversary. Thus, current fully dynamic algorithms improve on the baseline only by allowing for either a substantially weaker approximation or a substantially larger query time.

For the partially-dynamic setting, no improvement beyond these fully-dynamic algorithms is known for $(1\pm\eps)$-approximate all-pairs maxflow or effective resistances. Significant progress has instead been made for maxflow between
a \emph{fixed} pair $s,t$. These works obtain total update time
$\tilde{O}_{\eps}(m^{1+o(1)})$
\cite{van2024incremental,chen2024almostlinear,van2024almost} or
$\widetilde{O}_{\eps}(n^2)$
\cite{goranci2023efficient,goranci2025incremental,kravchenko2026partially}. Both approaches work against an adaptive adversary; the former is even deterministic.
Both heavily exploit the monotonicity of partially dynamic graphs: connectivity can only improve under insertions and only deteriorate under deletions. 

\paragraph{Our contribution.} In this paper, we break the update time baseline of $\widetilde{O}_{\eps}(n)$. We give a simple 
algorithm that maintains $(1\pm\eps)$-approximations to both all-pairs maxflow
values and all-pairs effective resistances in
$\widetilde{O}_{\eps}(n^2)$ total update time. 

\begin{theorem}\label{thm:main}
Let $G=(V,E,w)$ be an $n$-vertex undirected graph with polynomially-bounded positive edge weights undergoing either only edge insertions or only edge deletions, and let $\eps\in(0,1)$. There is a randomized data structure
that can be queried at any time, for any pair $s,t\in V$, to return
\begin{itemize}
\item a $(1\pm\eps)$-approximation to the $st$-maxflow value, and
\item a $(1\pm\eps)$-approximation to the $st$-effective resistance.
\end{itemize}
Each query takes worst-case time $\widetilde{O}(1/\eps^2)$, and the total
update time is $\widetilde{O}(n^2/\eps^7)$. The guarantees hold with high
probability against an adaptive adversary.
\end{theorem}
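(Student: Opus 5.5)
The plan follows the stability principle sketched in the abstract. We partition the update sequence into epochs, rebuild a spectral sparsifier together with the static oracles once per epoch, and answer queries from the current sparsifier. The argument has four steps: a stability lemma, a bound on the number of epochs, an accounting of costs, and a treatment of adaptivity.

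\textbf{Stability lemma.} Let $H$ be the graph at the start of an epoch, and let $F$ be the set of edges updated since then. We define the epoch's leverage mass as $\sum_{e\in F} w_e b_e^\top \LL_{H'}^{+} b_e$, where $H'$ is the larger of the two graphs: $H'=H$ under deletions and $H'$ is the current graph under insertions. If this mass is at most $\eps$, then $\LL_{F}\preceq \eps \LL_{H'}$, because $\LL_{H'}^{+/2}\LL_F\LL_{H'}^{+/2}$ is PSD with trace at most $\eps$. Hence the current graph is a $(1\pm\eps)$-spectral approximation of $H$. Spectral approximation preserves effective resistances up to $1\pm O(\eps)$. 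It also preserves all cut values, since cuts are quadratic forms of indicator vectors, and therefore preserves $st$-maxflow values by max-flow/min-cut. Under insertions the current graph is unknown in advance, so we measure each inserted edge's leverage in the graph just after that insertion and sum these. Since $\LL_G^+$ only shrinks as edges are added, this sum still dominates the leverage of each edge in the final graph.

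\textbf{Bounding the number of epochs.} We need the total mass over the whole monotone sequence to be $\tilde O(n)$. Take insertions: if each edge is counted with its leverage in the graph right after it is added, the sum telescopes via $\log\det$ of $\LL+\delta I$ restricted to the complement of $\vecone$. Each rank-one update with leverage $\ell\le 1$ raises $\log\det$ by $-\log(1-\ell)\ge \ell$. The total change is $O(n\log(nW))$ for polynomially bounded weights, provided we handle connectivity by working per component or by adding a tiny regularizing clique. Deletions follow by reversing time. Starting a new epoch whenever the mass exceeds $\eps$ therefore gives $\tilde O(n/\eps)$ epochs.

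\textbf{Costs per epoch.} We cannot compute leverage scores exactly. We maintain estimates from a Johnson--Lindenstrauss sketch: take $O(\log n/\eps^2)$ solves against a sparsifier of $H$, computed once per epoch. For a new edge $(u,v)$, this lets us evaluate $b_e^\top\LL_H^+ b_e$ in $\tilde O(1/\eps^2)$ time. Under insertions, the leverage in the current graph is at most that in $H$, so the $H$-estimate is a valid upper bound and epochs only end earlier. We must check that the count of $\tilde O(n/\eps)$ epochs survives the use of these upper bounds. I would handle this by charging to the $\log\det$ potential of $H$, combined with a doubling argument: close the epoch once the $H$-mass reaches $\eps$, which is a $\ge\eps/2$ increase of the potential. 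Each epoch costs $\tilde O(m/\eps^{O(1)})$ to build the sparsifier, for example by sampling from the sketch-based leverage estimates. Rebuilding the static $(1\pm\eps)$ all-pairs maxflow oracle of \cite{li2023near} and the Spielman--Srivastava effective-resistance oracle on the $\tilde O(n/\eps^2)$-edge sparsifier costs $\tilde O(n/\eps^{O(1)})$. Summing over epochs gives roughly $\tilde O(mn/\eps^{O(1)})$, which is too much.

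\textbf{Main obstacle: avoiding the $m$-per-epoch rebuild.} The first fix I would try is to maintain the sparsifier incrementally. Sample each updated edge on arrival with probability proportional to its estimated leverage, and reserve full resampling for the $O(\log n)$ levels of a doubling scheme. Then the per-epoch cost is dominated by the oracles on the $\tilde O(n/\eps^2)$-edge sparsifier. With $\tilde O(n/\eps)$ epochs this gives $\tilde O(n^2/\eps^{O(1)})$ total, and the exponent $\eps^{-7}$ comes from combining the sparsifier size, the epoch count and the oracle costs. The per-edge processing is $\tilde O(m/\eps^2)=\tilde O(n^2/\eps^2)$.

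\textbf{Adaptivity.} The epoch boundaries depend only on the update sequence and on the sparsifiers. Under deletions, sampled edges may be deleted adversarially, so I would resample from scratch at each epoch start; the current graph remains close to $H$ regardless of which edges the adversary picks. Outputs are computed from a fresh sparsifier of $H$, and they reveal only $(1\pm\eps)$-approximate values, which I would argue are determined by $H$ up to rounding. The remaining failure risk is the JL and sampling randomness: we apply a union bound with fresh randomness per epoch, plus rounding of answers to powers of $(1+\eps)$ to limit information leakage.
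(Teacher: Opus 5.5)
Your stability lemma, the $\log\det$ bound on the number of epochs, and the per-update leverage estimates from a Johnson--Lindenstrauss oracle on the epoch-start sparsifier all match the paper's argument. The gap is at the step you flag as the main obstacle: you never produce a fresh sparsifier of the \emph{current} graph at each epoch boundary in $\tilde{O}(n\,\mathrm{poly}(1/\eps))$ time.

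Your proposed fix, online sampling of edges on arrival plus an unspecified doubling scheme, only makes sense for insertions. For deletions nothing arrives, and in your adaptivity paragraph you fall back to resampling from scratch at every epoch start. With your method, which samples by sketch-based leverage estimates of all surviving edges, that costs $\tilde{O}(m/\eps^{O(1)})$ per epoch. Over $\tilde{O}(n/\eps)$ epochs this is back to $\tilde{O}(mn/\eps^{O(1)})$, i.e.\ $\tilde{O}(n^3/\eps^{O(1)})$ for dense graphs, which is the bound you rejected yourself.

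The missing ingredient is the implicit sampler of \Cref{thm:implicitSampler} (Bernstein et al.). It absorbs arbitrary insertions and deletions in $\tilde{O}(1)$ amortized time each. On demand, it returns a fresh $(1\pm\eps)$-sparsifier with $\tilde{O}(n/\eps^3)$ edges in $\tilde{O}(n/\eps^3)$ time, against an adaptive adversary. Calling it once per epoch costs $\tilde{O}(n^2/\eps^4)$ in total. Rebuilding the $\eps^{-3}$-dependent maxflow oracle on $\tilde{O}(n/\eps^3)$ edges costs $\tilde{O}(n/\eps^6)$ per epoch, and over $\tilde{O}(n/\eps)$ epochs this gives the stated $\tilde{O}(n^2/\eps^7)$.

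Your adaptivity argument is also not sound as written. Rounding answers to powers of $(1+\eps)$ does not make them a function of $H$ alone, since values near a rounding boundary still leak the sampling randomness. For an online sample accumulated over the whole incremental sequence, you would need a real matrix-martingale argument instead.

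With the implicit sampler no leakage argument is needed, for three reasons:
\begin{itemize}
\item each epoch starts from a sparsifier that is correct w.h.p.\ for all vectors simultaneously;
\item the static effective-resistance and maxflow oracles are correct w.h.p.\ for all pairs simultaneously;
\item within an epoch, $H \approx G^{(t)}$ then follows deterministically from the stability lemma, whatever the adversary does.
\end{itemize}
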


Our result matches the $\widetilde{O}{\eps}(n^2)$ total update time of the fixed-pair algorithms \cite{goranci2023efficient,goranci2025incremental,kravchenko2026partially} while supporting queries for arbitrary pairs. It is
near-optimal for dense monotone update sequences with
$m=\Theta(n^2)$ and improves on the $\widetilde{O}{\eps}(mn)$ baseline whenever $m=\omega(n)$. Compared with the previous fully dynamic all-pairs results, we obtain a $(1\pm\eps)$ rather than a subpolynomial approximation for maxflow, while retaining fast query time. For effective resistance, we additionally obtain guarantees against an adaptive adversary.

We note that the dependence on $\eps^{-1}$ in \label{thm:main} is larger than in the corresponding fixed-pair results. We believe that improving this dependence poses an interesting open problem.

\paragraph{Our approach.} In this article, we isolate a particularly simple consequence of monotonicity. We show that the online update sequence can be partitioned into $\tilde{O}(n/\eps)$ epochs. Within an epoch, graphs spectrally approximate each other closely, and so a sparsifier of the first graph in the epoch approximates all remaining graph version reasonably well. Thus, we can reduce the number of times that we sample a new sparsifier to $\tilde{O}(n/\eps)$ and thus only requires $\tilde{O}(n/\eps)$ many static all-pairs oracles rebuilds on such sparsifiers which then yields \label{thm:main} as corollary.

To partition epochs, we use a simple, elegant approach: we measure leverage scores of edges that are updated, and as long as edges touched by updates within the epochs have small combined leverage score mass, we know that graphs still spectrally approximate each other closely. On the other hand, since the total leverage score mass over all updates is bounded by $O(n \log n)$, we can show that we only need to start a new epoch $\tilde{O}_{\eps}(n)$ times.

\section{Preliminaries}

Given a graph $G = (V, E, w)$, we denote by $\LL_G = \sum_{e \in E} w_e \bb_e \bb_e^{\top}$ the Laplacian matrix of $G$. Here for an edge $e = (u, v)$ with some arbitrary but fixed orientation, we defined $\bb_e \in \R^V$ as $\bb_e = \vecone_v - \vecone_u$, where $\vecone_u$ is the vector with $1$ at entry $u$ and $0$ everywhere else. The effective resistance between two vertices $u$ and $v$ is defined as 
$$
R^{\text{Eff}}_{G}(u, v) := (\vecone_u - \vecone_v)^{\top} \LL_G^{\dag} (\vecone_u - \vecone_v),
$$
where $\LL_G^{\dag}$ is the Moore-Penrose Pseudo-Inverse of $\LL_G$. The effective resistance between $u$ and $v$ can be interpreted as the electrical energy it takes to route one unit of flow between $u$ and $v$ in $G$.

We say that a graph $H$ on the same vertex set as $G$ is a $(1\pm \eps)$-spectral sparsifier of $G$, also denoted by $H \approx_{\eps} G$, if
\[
    (1-\epsilon) \xx^\top \LL_G\xx
    \leq \xx^\top \LL_H\xx
    \leq (1+\epsilon)\xx^\top L_G\xx
    \qquad\text{for every }\xx\in\mathbb{R}^V
\]
where $\LL_G$ and $\LL_H$ are the Laplacian matrices associated with graphs $G$ and $H$, respectively. Using the Loewner order, the above condition can also succinctly be written as $(1-\epsilon) \LL_G \preceq \LL_H \preceq (1+\epsilon) \LL_G$.

The reason why we are interested in spectral sparsification is the following basic observation.
\begin{observation}
    Let $H$ be a $(1 \pm \epsilon)$ spectral sparsifier of $G$. Then for any pair of vertices $u, v \in V$, the effective resistance and maxflow value between $u$ and $v$ in $H$ is a $(1\pm \epsilon)$ approximation of the corresponding value in $G$.
\end{observation}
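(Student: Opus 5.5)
We treat the two quantities separately. Both follow from the Loewner sandwich $(1-\eps)\LL_G \preceq \LL_H \preceq (1+\eps)\LL_G$: for effective resistance we invert it, and for maxflow we restrict it to cut vectors.

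\textbf{Effective resistance.} First we check that $G$ and $H$ have the same kernel. This is immediate from the sandwich: $\xx^\top\LL_H\xx=0$ if and only if $\xx^\top\LL_G\xx=0$. Since both matrices are PSD, their null spaces coincide, and hence so do their images. On this common image both matrices are positive definite, and we apply the standard fact that inversion reverses the Loewner order for positive definite matrices. This gives
\[
\tfrac{1}{1+\eps}\LL_G^{\dag}\preceq \LL_H^{\dag}\preceq \tfrac{1}{1-\eps}\LL_G^{\dag}.
\]
If $u,v$ lie in the same component, then $\vecone_u-\vecone_v$ is orthogonal to the common kernel, so it lies in the image, and evaluating the quadratic form there yields the bound. (If $u,v$ are disconnected in $G$, they are also disconnected in $H$, since the kernels agree, and the resistance is infinite in both.)

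The resulting factors $1/(1+\eps)\ge 1-\eps$ and $1/(1-\eps)\le 1+O(\eps)$ give a $(1\pm O(\eps))$ approximation. To get exactly $(1\pm\eps)$, either read the observation up to constant rescaling of $\eps$, or use the convention that $1/(1-\eps)$ counts as a $(1\pm\eps)$ factor. I would note this explicitly.

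\textbf{Maxflow.} By the maxflow--mincut theorem, the $uv$-maxflow value equals the minimum weight of a cut $S$ separating $u$ from $v$. For the indicator vector $\xx=\vecone_S$ we have
\[
\xx^\top\LL_G\xx=\sum_{e=(a,b)}w_e(x_a-x_b)^2=w_G(\partial S).
\]
So the spectral condition implies $(1-\eps)\,w_G(\partial S)\le w_H(\partial S)\le(1+\eps)\,w_G(\partial S)$ for every $S$. Taking the minimum over all $S$ with $u\in S$ and $v\notin S$ preserves these bounds, since both sides scale uniformly over the same family of sets. This gives the maxflow approximation directly.

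The only mildly delicate step is the pseudo-inverse inversion: we must justify the common kernel and restrict to the image so that operator monotonicity of inversion applies. Everything else is routine.
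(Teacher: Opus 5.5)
Your proof is correct and follows the paper's approach. The paper also proves the maxflow part by plugging cut indicator vectors into the Laplacian quadratic forms and applying max-flow/min-cut, and it dismisses effective resistance as holding ``by definition''. Your common-kernel argument and the order reversal for the pseudo-inverse make that step precise. Your remark that inversion gives only factors $1/(1+\eps)$ and $1/(1-\eps)$, so a $(1\pm O(\eps))$ guarantee, is accurate and is a detail the paper glosses over.
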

\begin{proof}
    For effective resistances, the above follows by definition. For the maxflow values, we use that if $H$ is a spectral sparsifier of $G$, then also all cuts are preserved up to $(1 \pm \epsilon)$ factors, which can be seen by plugging the indicator vectors of the cut into the Laplacian quadratic forms. The result on the maxflow value then follows from the maxflow min-cut theorem.
\end{proof}

\section{The Algorithm}

\subsection{Maintaining Flow Values via Adaptive Sparsifiers}

\paragraph{Statement of the Sparsifier Result.} The key idea behind our algorithm is that for any partially-dynamic graph $G$, we can maintain an adaptive spectral sparsifier $H$ of $G$ that undergoes changes during at most $\tilde{O}(n/\epsilon)$ time steps. Thus, it suffices to run a fast static algorithm to compute all-pairs effective resistances/ all-pairs maxflow values whenever $H$ changes. We point out that a stronger result, bounding even the number of changes to $H$ by $\tilde{O}(n/\epsilon^2)$, has been shown recently for incremental\footnote{We henceforth follow conventions to call graphs undergoing only edge insertions \emph{incremental}, and graphs that only undergo edge deletions \emph{decremental}.} graphs \cite{goranci2026onlinesparsificationalgorithmbook}. While using a similar algorithm, our weaker result has a significantly simpler proof. For decremental graphs, we are not aware of any previous algorithm that maintains a spectral sparsifier that changes at most $\Otil_{\epsilon}(n)$ many times and takes total update time $\Otil_{\epsilon}(n^2)$, even against an oblivious adversary. 

%this statement has only been known in the oblivious adversary setting. \gernot{has it even been known in the oblivious adversary setting? I dont think so?}\mprobst{Good point, I am also not really aware of this}

\begin{theorem}\label{thm:stableSparsifier}
Let $G=(V,E,w)$ be an undirected graph with polynomially bounded positive
edge weights undergoing either only edge insertions or only edge deletions,
and let $\epsilon\in(0,1)$. There is a randomized algorithm that maintains
a weighted graph $H$ such that, at any time, $G \approx_{\eps} H$.

Moreover, $H$ has $\tilde{O}( n/\epsilon^3)$ edges, changes at only $\widetilde{O}(n/\epsilon)$ update steps, and can be maintained in total update time $\tilde{O}(n^2 / \eps^{4})$. 

The guarantees hold with high probability against an adaptive adversary.
\end{theorem}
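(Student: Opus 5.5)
The plan is to partition the online update sequence into epochs, resample a sparsifier only at the start of each epoch, and keep $H$ fixed inside an epoch. Let $G_0$ be the graph at the start of the current epoch. At that moment we compute approximate leverage scores $\tau_e \approx w_e \bb_e^\top \LL_{G_0}^\dag \bb_e$ for every edge present (or, in the incremental case, for every edge as it arrives, measured in the epoch-start graph). From these we build a fresh $(1\pm\eps/3)$ spectral sparsifier $H$ of $G_0$ by importance sampling. For an inserted edge $e$ we take $\tau_e := \min\{1, w_e\bb_e^\top \LL_{G_0}^\dag \bb_e\}$; for deletions we use $\tau_e$ as computed in $G_0$. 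The epoch continues while the cumulative score $\sum_{e\text{ updated in epoch}} \tau_e$ stays at most $\eps/10$, and a new epoch begins as soon as it would exceed this. Sampling with a fresh random seed only at epoch boundaries, with oversampling factor $\tilde{O}(1/\eps^3)$, keeps $H$ at $\tilde{O}(n/\eps^3)$ edges. The point of the construction is that $H$ depends only on $G_0$, while the epoch boundaries are decided by deterministic approximate leverage computations. An adaptive adversary therefore learns nothing that affects the correctness of the current $H$, provided the leverage estimates are computed with fresh randomness and are correct whp.

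\textbf{Stability within an epoch.} Let $\Delta$ be the set of edges updated so far in the epoch, and let $\LL_\Delta = \sum_{e\in\Delta} w_e\bb_e\bb_e^\top$.
\begin{itemize}
\item \emph{Decremental case.} Here $\LL_G = \LL_{G_0} - \LL_\Delta$. Since $\LL_\Delta \preceq \LL_{G_0}$, we have $\|\LL_{G_0}^{\dag/2}\LL_\Delta \LL_{G_0}^{\dag/2}\| \le \mathrm{tr}(\cdot) = \sum_{e\in\Delta} \tau_e \le \eps/10$. Hence $(1-\eps/10)\LL_{G_0} \preceq \LL_G \preceq \LL_{G_0}$. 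There is one subtlety: $\LL_\Delta$ lies in the image of $\LL_{G_0}$, so the kernels agree.
\item \emph{Incremental case.} The same trace bound gives $\LL_{G_0}\preceq \LL_G\preceq(1+\eps/10)\LL_{G_0}$, but only if every inserted edge lies within components of $G_0$. If an insertion would connect components (infinite leverage), I force a new epoch. This happens at most $n-1$ times.
\end{itemize}
Composing $H\approx_{\eps/3} G_0$ with $G_0\approx_{\eps/10} G$ gives $H\approx_\eps G$.

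\textbf{Counting epochs.} This is the main obstacle. Each non-forced epoch ends with score mass at least $\eps/10$ (up to the constant loss from a single edge, since $\tau_e\le 1$). So it suffices to show that the total mass $\sum_t \tau_{e_t}$ over all updates is $\tilde{O}(n)$, where each $\tau_{e_t}$ is measured in the epoch-start graph, not the current one.
\begin{itemize}
\item \emph{Incremental case.} Within an epoch, the leverage of $e$ in $G_0$ is at most $(1+\eps)$ times its leverage in the graph right before its insertion, because $G_0 \approx G$. Summing $\min\{1, w_e \bb_e^\top \LL_{G_{t-1}}^\dag\bb_e\}$ over an insertion sequence is the standard online-leverage bound. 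By the matrix determinant lemma, each term is $O(\log\det$ increase$)$ on the range. Since weights are polynomially bounded, the total is $O(n\log n)$.
\item \emph{Decremental case.} Reverse time: the deletion sequence is an insertion sequence run backwards, and the leverage of $e$ in $G_0$ is within a constant of its leverage in the graph just before its deletion. The potential $\log\det'$ therefore decreases by $\Omega(\tau_e)$ per deletion. A disconnecting deletion has leverage $1$, and we handle those by forced epochs, at most $n-1$ of them. The total is again $O(n\log n)$.
\end{itemize}
Together this yields $\tilde{O}(n/\eps)$ epochs, and $H$ changes only at epoch starts.

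\textbf{Running time.} At each epoch start we compute leverage estimates for all current edges via the Spielman--Srivastava JL approach. This takes $\tilde{O}(1/\eps^2)$ Laplacian solves on a sparsifier of $G_0$, not on $G_0$ itself: we use the previous $H$, which approximates $G_0$ within $1\pm\eps$ and suffices for constant-factor leverage estimates. The JL sketch then gives $\tilde{O}(1/\eps^2)$-dimensional embeddings, so any edge's estimate is available in $\tilde{O}(1/\eps^2)$ time. Each of the $m = O(n^2)$ edges is then touched $\tilde{O}(1/\eps^3)$-ish per epoch only through resampling. To avoid $m$ per epoch, I resample lazily: I sample the $\tilde{O}(n/\eps^3)$ edges directly using the sketch to bound sums in $\tilde{O}(n/\eps^3)$ time. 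Failing that, I accept an $O(m)$ sampling pass per epoch, charged as $\tilde{O}(n/\eps)\cdot$ sparsifier cost plus $\tilde{O}(m/\eps^{\cdot})$ score evaluations. I expect this bookkeeping to give the stated $\tilde{O}(n^2/\eps^4)$.
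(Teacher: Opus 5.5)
Your within-epoch stability argument and your epoch count are sound. Measuring leverage in the epoch-start graph $G_0$ turns stability into a one-line trace bound, and the $\log\det'$ potential is a valid self-contained replacement for the online and reverse-online leverage bounds the paper cites. The gap is the total update time $\tilde{O}(n^2/\eps^4)$, which is the substantive algorithmic content of the theorem. Everything depends on producing, at each of the $\tilde{O}(n/\eps)$ epoch starts, a fresh $(1\pm\eps/3)$-sparsifier of the current graph $G_0$ in $\tilde{O}(n\,\mathrm{poly}(1/\eps))$ time rather than $\Omega(m)$, and you do not show how. Your fallback of an $O(m)$ pass per epoch costs $\tilde{O}(mn/\eps)$, i.e.\ $\tilde{O}(n^3/\eps)$ for dense graphs, which is essentially the baseline the theorem is meant to beat.

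Your primary route has two holes. First, the previous $H$ does not approximate the new $G_0$. The update that triggers a new epoch may have leverage close to $1$. In the decremental case, deleting such a near-bridge can change effective resistances by a polynomial factor, and $\LL_H - w_e\bb_e\bb_e^\top$ need not even be PSD. So you have no sparsifier of $G_0$ from which to estimate its leverage scores; this is the usual chicken-and-egg problem of sparsification. For insertions, $H$ plus the new edge would suffice. Second, even with a sketch in hand, you leave unspecified how to sample $\tilde{O}(n/\eps^3)$ edges proportionally to their sketched leverage scores without scanning all $m$ current edges. That is a nontrivial data-structure problem.

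The missing ingredient is the fully dynamic implicit sampler of Bernstein et al.\ that the paper imports. It processes every insertion or deletion in $\tilde{O}(1)$ amortized time. On demand, it returns a fresh $(1\pm\eps)$-sparsifier of the \emph{current} graph with $\tilde{O}(n/\eps^3)$ edges in $\tilde{O}(n/\eps^3)$ time, against an adaptive adversary. It maintains an expander decomposition in which edges have uniformly small effective resistance, so it never needs leverage scores of $G_0$ at all. With it, you only need leverage estimates for the \emph{updated} edges. The paper obtains these from an effective-resistance oracle rebuilt on each new $H$, which is valid because $H$ approximates the current graph by the invariant. Each epoch then costs $\tilde{O}(n\,\mathrm{poly}(1/\eps))$ and each update $\tilde{O}(\mathrm{poly}(1/\eps))$, which gives the stated bound.
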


We remark that the statement that $H$ changes at only $\Otil(n / \epsilon)$ update steps is strictly weaker than $H$ changing by at most $\Otil(n / \epsilon)$ many edges over the entire course of the algorithm. That is because, for a timestep in which $H$ changes, it can change by arbitrarily many edges, i.e., all edges of $H$ might change at this time. This distinction does not matter for our application. We further note that, within an epoch where $H$ is unchanged, all graph versions of $G$ are $\eps$-spectral approximations, and thus $O(\eps)$-spectral approximations of each other, by transitivity. 

\paragraph{Reducing Flow Maintenance to Sparsifier Maintenance.} To obtain \Cref{thm:main} for effective resistances, it suffices to combine \Cref{thm:stableSparsifier} with the following statement from \cite{SpielmanS08} which cleverly exploits the Johnson-Lindenstrauss Lemma to compute all-pairs effective resistances via few Laplacian solves.

\begin{theorem}[see \cite{SpielmanS08}, Static All-Pairs Effective Resistances]\label{thm:ERstatic}
Given a graph $G=(V, E, w)$ with polynomially bounded conductances $w$, there is a randomized algorithm that constructs in $\tilde{O}(m \log(1/\epsilon) / \epsilon^2)$ time an oracle such that any query $s,t \in V$ can be answered in $\tilde{O}(1/\epsilon^2)$ time and returns a $(1\pm\epsilon)$-approximation to the $st$-effective resistance in $G$. The algorithm succeeds with high probability.
\end{theorem}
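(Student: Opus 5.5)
This is the cited Spielman--Srivastava theorem, so I would reconstruct their Johnson--Lindenstrauss argument. Write $\LL = \LL_G = \mathbf{B}^\top \mathbf{W} \mathbf{B}$, where $\mathbf{B}\in\R^{E\times V}$ is the signed edge-vertex incidence matrix and $\mathbf{W}$ is the diagonal weight matrix. Then for any $s,t$,
\[
R^{\text{Eff}}_G(s,t)=\|\mathbf{W}^{1/2}\mathbf{B}\LL^{\dag}(\vecone_s-\vecone_t)\|_2^2 ,
\]
because $\LL^\dag \mathbf{B}^\top\mathbf{W}\mathbf{B}\LL^\dag=\LL^\dag$. So effective resistances are squared distances between the columns of the $E\times V$ matrix $\mathbf{W}^{1/2}\mathbf{B}\LL^\dag$. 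These are $n$ points in $\R^m$, and we only need their pairwise distances up to $(1\pm\epsilon)$.

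First I would apply Johnson--Lindenstrauss. Let $\mathbf{Q}$ be a $k\times m$ random $\pm1/\sqrt{k}$ matrix with $k=O(\log n/\epsilon^2)$. With high probability it preserves all $\binom n2$ squared distances up to $1\pm\epsilon/3$. Computing $\mathbf{Z}=\mathbf{Q}\mathbf{W}^{1/2}\mathbf{B}\LL^\dag$ exactly would be too slow. Instead, I would compute the $k$ rows $\mathbf{y}_i=\mathbf{Q}\mathbf{W}^{1/2}\mathbf{B}$ in $O(km)$ time, and then solve $k$ Laplacian systems $\LL\mathbf{z}_i=\mathbf{y}_i$ approximately with a near-linear-time solver. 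Each solve takes $\tilde O(m\log(1/\delta))$ time, for a total of $\tilde O(m\log(1/\delta)/\epsilon^2)$. A query stores the $k$-dimensional vectors $\tilde{\mathbf{Z}}\vecone_s$ and returns $\|\tilde{\mathbf{Z}}(\vecone_s-\vecone_t)\|^2$ in $O(k)=\tilde O(1/\epsilon^2)$ time.

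The main obstacle is converting the solver's $\LL$-norm error guarantee $\|\mathbf{z}_i-\LL^\dag\mathbf{y}_i\|_{\LL}\le\delta\|\LL^\dag\mathbf{y}_i\|_{\LL}$ into a \emph{multiplicative} guarantee for every pair. I would argue as follows. Summing over $i$, the total error $\|\tilde{\mathbf{Z}}-\mathbf{Z}\|_F^2$ is controlled by $\delta^2$ times $\|\mathbf{Z}\|$-type quantities. Those are bounded by $\mathrm{poly}(n)\cdot\|\mathbf{Q}\mathbf{W}^{1/2}\mathbf{B}\|^2$ times eigenvalue bounds of $\LL$, which are polynomial in $n$ because the weights are polynomially bounded. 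On the other side, every effective resistance is at least $1/\mathrm{poly}(n)$: it is at least $1/w(\text{cut around } s)$. Hence taking $\delta=\epsilon/\mathrm{poly}(n)$ makes the additive error of each $\|(\tilde{\mathbf{Z}}-\mathbf{Z})(\vecone_s-\vecone_t)\|$ an $\epsilon/3$ fraction of $\|\mathbf{Z}(\vecone_s-\vecone_t)\|$. The factor $\log(1/\delta)=O(\log n+\log(1/\epsilon))$ accounts for the $\log(1/\epsilon)$ in the running time. Finally, combining the JL error and the solver error via the triangle inequality gives a $(1\pm\epsilon)$ approximation after rescaling $\epsilon$ by a constant, with high probability over $\mathbf{Q}$ and the solver.

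If $G$ is disconnected, the same reasoning applies component-wise. Pairs in different components have infinite resistance, which I would detect with a stored component label.
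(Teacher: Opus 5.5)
Your proposal is correct and follows essentially the route the paper relies on: the paper gives no proof of its own but cites Spielman--Srivastava, whose argument is this Johnson--Lindenstrauss projection of the columns of $\mathbf{W}^{1/2}\mathbf{B}\LL^{\dagger}$, followed by $O(\log n/\eps^2)$ approximate Laplacian solves with accuracy $\delta=\eps/\mathrm{poly}(n)$, justified by polynomial spectral bounds on $\LL$ and a $1/\mathrm{poly}(n)$ lower bound on resistances. One point needs more careful phrasing: the $\LL$-norm solver guarantee controls $\tilde{\mathbf{Z}}-\mathbf{Z}$ only after projecting out $\vecone$, not its full Frobenius norm (this is harmless since $\vecone_s-\vecone_t\perp\vecone$); alternatively, Cauchy--Schwarz in the $\LL$-inner product with $\|\mathbf{Q}\|_F^2=m$ gives $\|(\tilde{\mathbf{Z}}-\mathbf{Z})(\vecone_s-\vecone_t)\|_2^2\le\delta^2\, m\, R^{\text{Eff}}_G(s,t)$ directly, so $\delta=\Theta(\eps/\sqrt{m})$ already suffices.
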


To obtain \Cref{thm:main} for the maxflow values of $G$, it suffices to point out that any $(1\pm\epsilon)$-approximate spectral sparsifier is a $(1\pm\epsilon)$-approximate cut sparsifier and therefore, by the famous maxflow-mincut theorem, maxflow values in $G$ and $H$ are preserved up to a $(1\pm\epsilon)$ factor. Finally, our algorithm simply invokes the following theorem from \cite{li2021approximate, li2023near} whenever $H$ changes on the sparsifier $H$.

\begin{theorem}[Static All-Pairs maxflow]
Given a graph $G=(V, E, w)$ with polynomially bounded capacities $w$, there is a randomized algorithm that constructs in $\tilde{O}(m/\epsilon^3)$ time\footnote{The dependency on $\epsilon$ is not explicitly calculated in \cite{li2023near}. Our calculations show a dependency of $\epsilon^{-6}$ which can be improved to an $\epsilon^{-3}$ dependency using the recent result from \cite{li2025simple}.} an oracle such that any query $s,t \in V$ can be answered in $\tilde{O}(1)$ time and returns a $(1\pm\epsilon)$-approximation to the $st$-maxflow value in $G$. The algorithm succeeds with high probability.
\end{theorem}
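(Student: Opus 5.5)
The plan is to build a $(1+\epsilon)$-approximate Gomory--Hu tree $T$ of $G$ and answer each query $s,t$ with the minimum edge weight on the $s$--$t$ path in $T$. Correctness is then immediate from the definition of the tree. For every pair $s,t$, the lightest edge on the tree path induces a cut of $G$ whose value lies within $(1\pm\epsilon)$ of the true $st$-mincut, and by the maxflow-mincut theorem this is a $(1\pm\epsilon)$-approximation to the $st$-maxflow value. So the statement splits into two parts: constructing $T$ in $\tilde{O}(m/\epsilon^3)$ time, and answering path-minimum queries in $\tilde{O}(1)$ time.

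For the query side, I would preprocess $T$ with a Kruskal-style merge tree.
\begin{itemize}
\item Sort the $n-1$ tree edges in decreasing weight and add them one by one with union-find.
\item Each time an edge merges two components, create a new internal node labeled with that edge's weight, whose children are the roots of the two merged components.
\end{itemize}
The minimum-weight edge on the $s$--$t$ path in $T$ is then exactly the label of the lowest common ancestor of $s$ and $t$ in this merge tree. With a standard LCA structure this takes $O(n\log n)$ preprocessing and $O(1)$ query time. This part is routine.

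The construction of $T$ is where the work lies, and I would follow the approach of \cite{li2021approximate, li2023near}. The top level is a recursive divide-and-conquer in the style of Gomory--Hu.
\begin{itemize}
\item Pick a random pivot $r$.
\item Compute approximate mincut values from $r$ to all other vertices, using the isolating-cuts technique over $O(\log n)$ random subsample levels. This guarantees that, with good probability, a constant fraction of the vertices have their minimal approximate $r$-mincut side isolated.
\item Split along the found cuts, contract, and recurse. The recursion depth is $O(\log n)$ and the total instance size per level is $\tilde{O}(m)$.
\end{itemize}
Each isolating-cuts call needs only $O(\log n)$ approximate $st$-maxflow computations on graphs of total size $\tilde{O}(m)$. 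I would implement these with the $\tilde{O}(m/\epsilon)$-type approximate maxflow of \cite{peng2016approximate}, or with the simpler framework of \cite{li2025simple}. The error must be kept from compounding over the $O(\log n)$ recursion levels, so I would run every subroutine with accuracy $\epsilon' = \epsilon/\Theta(\log n)$. This costs only polylogarithmic factors, and the $\epsilon^{-3}$ dependence then comes from the flow subroutine together with the sampling needed for the approximate single-source step.

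I expect the main obstacle to be the approximate single-source mincut step. With only approximate flows the minimal mincut sides are no longer canonical, so the nesting and uncrossing arguments that make the exact Gomory--Hu recursion consistent break down. My first attempt would follow Li--Panigrahi and work with \emph{$(1+\epsilon')$-approximately minimal} cuts, computed after perturbing capacities so that near-mincuts become well-separated. I would then show that the contracted instances still contain the relevant approximate cuts for every pair, paying a factor $(1+\epsilon')$ per level. A union bound over all $\mathrm{poly}(n)$ random choices gives success with high probability.
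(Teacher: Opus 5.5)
Your query side (path minima via a Kruskal merge tree plus LCA) is fine. The overall shape, an approximate Gomory--Hu tree answered by path minima, is also what the paper relies on. The paper gives no proof of its own: it imports the result from \cite{li2021approximate, li2023near}, and uses the faster fair-cut routine of \cite{li2025simple} to get the $\epsilon^{-3}$ in the footnote.

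The gap is in your construction step. The analysis of \cite{li2021approximate} uses \emph{exact} maxflows, which today cost $m^{1+o(1)}$ rather than $\tilde{O}(m/\epsilon^3)$. Its isolating-cuts step does not survive replacing them by arbitrary $(1+\epsilon')$-approximate $(A_i,B_i)$-mincuts, as you propose with \cite{peng2016approximate}. Take a minimum $v$-isolating cut $T$ and the computed cut $S\ni v$. Submodularity together with $d(T\cup S)\ge\lambda(A_i,B_i)$ now only gives $d(T\cap S)\le d(T)+d(S)-\lambda(A_i,B_i)\le d(T)+\epsilon' d(S)$. The error is additive in the value of a cut separating half the terminals, not multiplicative in $d(T)$. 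So choosing $\epsilon'=\epsilon/\Theta(\log n)$ does not help. Perturbing capacities does not help either, since the flow routine may return \emph{any} near-minimum cut.

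Here is a concrete failure. Attach each terminal $v_j$ to a private vertex $x_j$ by an edge of weight $W$, and each $x_j$ to a common hub by a unit edge. Every minimum $v_j$-isolating cut has value $1$. For a balanced bipartition, $S=A\cup\{x_j: v_j\in A\}$ is an $(A,B)$-mincut of value $|A|$. For some $v_{j_0}\in A$, the set $S\setminus\{x_{j_0}\}$ is still $(1+\epsilon')$-approximate if $W=1+\epsilon'|A|$. Once it is used, every $v_{j_0}$-isolating cut inside the region of $v_{j_0}$ costs at least $W$, a factor $\Theta(\epsilon' k)$ above the optimum.

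The missing idea is the \emph{fair cut}, which is exactly what \cite{li2023near} contributes. A $(1+\epsilon')$-fair $(A,B)$-cut $S$ comes with an $(A,B)$-flow that sends at least $c(e)/(1+\epsilon')$ out of $S$ across every edge $e\in\partial S$. Since $T\setminus S$ contains no terminal, flow conservation there forces the capacity from $T\setminus S$ to $V\setminus T$ to be at least a $1/(1+\epsilon')$ fraction of the capacity from $T\cap S$ to $T\setminus S$. Hence $d(T\cap S)\le(1+\epsilon')\,d(T)$.

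This per-terminal multiplicative bound compounds only to $(1+\epsilon')^{O(\log n)}$ over the bipartitions and recursion levels. That is precisely what your choice $\epsilon'=\epsilon/\Theta(\log n)$ can control. In the example above the bad cut is not fair: the $W$-edge cannot carry $W/(1+\epsilon')>1$ units into $x_{j_0}$, whose only other edge has capacity $1$. With this lemma, approximate isolating cuts and then the approximate Gomory--Hu recursion go through in near-linear time.

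Your passing mention of \cite{li2025simple} points at the right tool. But it is needed for its fairness guarantee, not as an interchangeable approximate-maxflow routine. Capacity perturbation and approximately minimal cuts do not substitute for it.
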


\subsection{Maintaining an Adaptive Sparsifier}

Finally, we describe our algorithm to maintain a sparsifier against an adaptive adversary that only changes $\tilde{O}(n/\eps)$ times over the course of the algorithm. Before we state the algorithm, we briefly review a result on implicit sampling from dynamic graphs.

\paragraph{Implicit Sampling.} We use the following result about implicitly sampling a spectral sparsifier in a fully-dynamic graphs appears in the arXiv version \cite{bernstein2020fully} of \cite{bernstein_et_al:LIPIcs.ICALP.2022.20}.

\begin{theorem}[see \cite{bernstein2020fully}, Theorem 10.5.]\label{thm:implicitSampler} Given an $m$-edge weighted graph $G$ and approximation parameter $\eps \in (0,1)$, there exists a data structure $\textsc{ImplicitSampler}(G, \eps)$ that supports two operations:
\begin{itemize}
    \item \textsc{Update}($e$): if $e$ is currently in $G$, it is deleted; otherwise it is inserted.
    \item \textsc{ReturnSparsifier}(): returns a $(1\pm \eps)$-spectral sparsifier $H$ of the current graph $G$ consisting of $\tilde{O}(n/\eps^3)$ edges.
\end{itemize}
The data structure requires $\tilde{O}(m)$ preprocessing time, $\tilde{O}(1)$ amortized update time and $\tilde{O}(n/\eps^3)$ query time. 

The guarantees hold with high probability against an adaptive adversary.
\end{theorem}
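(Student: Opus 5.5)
The plan is to follow the spanner-bundle sparsification paradigm (Koutis–Xu style), made dynamic and implicit. The key point for adaptivity is that all randomness is drawn fresh at query time. The data structure therefore stores only deterministic (or adaptively robust) information, and the adversary's view of previous outputs cannot bias the next sample.

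\textbf{Step 1: a maintained bundle.} First bucket edges by weight into $O(\log n)$ classes, since the weights are polynomially bounded and each class has weights within a factor $2$. In each class, maintain a $t$-bundle spanner $B=T_1\cup\dots\cup T_t$ with $t=\Theta(\log^2 n/\eps^2)$. Here $T_i$ is a fully dynamic $O(\log n)$-stretch spanner of the graph minus $T_1\cup\dots\cup T_{i-1}$. We use a dynamic spanner with $\tilde O(1)$ amortized update time and recourse that works against an adaptive adversary, for instance a deterministic one. A change in $T_i$ propagates as $\tilde O(1)$ updates to the input of $T_{i+1}$. The amortized cost is therefore polylogarithmic provided the recourse compounds only over $t$ levels. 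Keeping this cascade at $\tilde O(1)$ per update, rather than $\tilde O(1)^t$, is the step I expect to be the main obstacle. If needed, I would first try a spanner with $O(1)$ worst-case recourse per update, or a batched rebuild with geometric resets.

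\textbf{Step 2: leverage bound on non-bundle edges.} Every edge $e\notin B$ in a weight class has $t$ edge-disjoint paths of stretch $O(\log n)$ inside that class. Hence $w_e R^{\mathrm{Eff}}(e)=O(\log n/t)=O(\eps^2/\log n)$. By standard matrix-Chernoff importance sampling, the following graph is a $(1\pm\eps/\log n)$-spectral approximation with high probability: keep $B$, and keep each non-bundle edge independently with probability $1/2$ at doubled weight. Recursing $O(\log n)$ rounds on the surviving graph halves the non-bundle edges each round. The errors compose to $1\pm\eps$, leaving $\tilde O(n t)=\tilde O(n/\eps^2)$ edges. The extra $1/\eps$ in the stated $\tilde O(n/\eps^3)$ bound leaves room for the per-round accuracy needed.

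\textbf{Step 3: implicit querying.} On \textsc{ReturnSparsifier} we never touch all $m$ edges. Non-bundle edges stored in an array per weight class are sampled by geometric skipping, in time proportional to the number kept. After the first halving we rebuild bundles only on the kept $\tilde O(n)$-sized remainder using static constructions. This gives $\tilde O(n/\eps^3)$ query time, with fresh randomness per query, so correctness holds with high probability for each query against an adaptive adversary.
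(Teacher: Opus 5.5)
Your idea of keeping the maintained structure adaptive-robust and drawing all sampling randomness fresh at query time is the right one; the cited construction does the same. The problem is that the certificate you maintain is too weak to give the claimed query time.

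\textbf{Query time fails.} A $t$-bundle only certifies the \emph{uniform} bound $w_eR^{\text{Eff}}(e)=O(\log n/t)$ for non-bundle edges. Matrix Chernoff then only permits sampling at rate $p=\Omega(\log^2 n/(t\eps^2))$. For any bundle of size $\tilde O(n/\eps^2)$ this rate is $\Omega(1/\mathrm{polylog}\, n)$. More generally, with only this certificate, one round cannot get below roughly $\sqrt{mn}/\eps$ edges whatever $t$ you pick.

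Your first query-time halving therefore keeps about $m/2$ edges, not an ``$\tilde O(n)$-sized remainder''. The remaining $\Theta(\log(m/n))$ Koutis--Xu rounds each build a static bundle on a graph with $m/2^j$ edges, so a query costs $\tilde O(m)$. That is useless here: \textsc{ReturnSparsifier} is called $\tilde O(n/\eps)$ times, giving $\tilde O(mn/\eps)$ total time instead of $\tilde O(n^2)$.

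You might instead maintain the later rounds' bundles dynamically. That commits to the earlier rounds' coin flips in advance, and the returned sparsifiers reveal them, which is exactly what an adaptive adversary exploits.

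Separately, your Step 2 parameters are off. Leverage $O(\eps^2/\log n)$ gives per-round error $O(\eps)$, not $\eps/\log n$, so you would need $t=\Theta(\log^4 n/\eps^2)$; this one is cosmetic.

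\textbf{Missing idea.} You need a maintained certificate whose leverage bound scales inversely with local density. The route sketched in the paper maintains an expander decomposition into (almost) $r$-regular $\phi$-expanders, in which every edge has $w_eR^{\text{Eff}}(e)=\tilde O(1/(\phi^2 r))$. A \emph{single} fresh round then samples each edge of a degree-$r$ cluster at rate $\tilde O(1/(\eps^2\phi^2 r))$. This goes directly from $m$ to $\tilde O(n/(\eps^2\phi^2))$ edges, and it can be run implicitly in output-proportional time via random indices into per-cluster edge arrays.

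This also sidesteps your Step 1 obstacle, which is genuine and which your fallback does not fix. Level $i+1$ of the bundle sees the updates to level $i$ plus the changes to $T_i$. Unless each level forwards at most one update per update it receives, the induced work grows like $c^t=2^{\Theta(\log^2 n/\eps^2)}$ for any constant $c>1$, which is superpolynomial. So $O(1)$ worst-case recourse is not enough.

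In the oblivious setting this cascade is avoided with monotone decremental spanners, where each edge enters each $T_i$ at most once, plus a decremental-to-fully-dynamic reduction. You have not established such a spanner against an adaptive adversary. With an expander decomposition there is only one structure to maintain, handled by expander pruning and batching.
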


The above algorithm is rather straightforwardly obtained by observing that edges in $r$-regular $\phi$-expanders have effective resistance $\tilde{O}(1/(\phi^2 \cdot r))$. This insight is combined with the by-now-standard machinery of maintaining expander decompositions (see \cite{saranurak2019expander}) and a clever batching scheme. We refer the interested reader to \cite{bernstein2020fully}.

\paragraph{The Algorithm.} We are now ready to state our algorithm, which hinges on an extremely simple insight: if little leverage score mass changes, we can simply keep the old sparsifier; if a lot changes, we need to implicitly re-sample it; however, this can only happen a few times. For our algorithm, at time $t > 0$, we define the current leverage score $\tau^{(t)}$ of the edge $e^{(t)} = (u^{(t)}, v^{(t)})$ touched by the $t$-th update by 
\[\tau^{(t)} := \min\{1, w_{e^{(t)}} \cdot R^{\text{Eff}}_{G^{(t-1)}}(u^{(t)}, v^{(t)})\}
\]
In the incremental setting the quantity $\tau^{(t)}$ is also known as the online leverage score, whereas in the decremental setting, they are the reverse online leverage scores \cite{cohen2016online, braverman2020near}.

\begin{algorithm}[H]
\caption{$\textsc{MaintainSparsifier}(G, \eps)$}
\label{algo:main_algo}
$\mathcal{I} \gets \textsc{ImplicitSampler}(G^{(0)}, \eps / 10)$.\\
$H^{(0)} \gets \mathcal{I}.\textsc{ReturnSparsifier}()$; $t' \gets 0$.\\
\For{update $e^{(t)} = (u^{(t)}, v^{(t)})$ to $G^{(t-1)}$}{
    $\mathcal{I}.\textsc{Update}(e^{(t)})$.\\
    Compute a $(1\pm \eps / 2)$-approximation $\hat{\tau}^{(t)}$ of $\tau^{(t)}$.\\
    \If(\tcp*[h]{Start new epoch.}){$\sum_{t' < t'' \leq t} \hat{\tau}^{(t'')} > \epsilon / 20$}{
        $H^{(t)} \gets \mathcal{I}.\textsc{ReturnSparsifier}()$; $t' \gets t$.
    }\Else{
        $H^{(t)} \gets H^{(t-1)}$.
    }
}
\end{algorithm} 

\paragraph{Maintaining Leverage Score Estimates.} It remains to address how we compute the leverage score estimates $\hat{\tau}^{(t)}$. To this end, we maintain the following invariant (with high probability).

\begin{invariant}\label{inv:mainInvariant}
After the $t$-th for-loop iteration, we have $H^{(t)} \approx_{\eps / 4} G^{(t)}$.
\end{invariant}

Thus, when computing $\hat{\tau}^{(t)}$, for any $t > 0$, we already have $H^{(t-1)} \approx_{\eps / 4} G^{(t-1)}$ readibly available. We maintain the data structure from \Cref{thm:ERstatic} on $H$ which can therefore report in $\tilde{O}(1/\eps^2)$ time a $(1\pm \eps / 10)$-approximate effective resistance estimate for the endpoints of $e^{(t)}$ on $H^{(t-1)}$. %(here a subtle detail is that we make the decision whether we want to keep $H^{(t-1)}$ as our sparsifier based on $\hat{\tau}^{(t)}$ so we cannot use information from $H^{(t)}$ yet). 

%While for decremental graphs, this immediately yields the current leverage score, for incremental graphs, we need to  use the following fact, which immediately gives us a formula for computing $\hat{\tau}^{(t)}$.  It follows almost immediately from the Sherman--Morrison identity.

%\begin{fact}
%For any graph $G$ and edge $e = (u,v)$ we have
%\[
%    R^{\text{Eff}}_{G \cup e}(u,v) = \frac{R^{\text{Eff}}_{G}(u,v)}{1 + R^{\text{Eff}}_{G}(u,v)} 
%\]
%where we use the convention that $\infty/(1+\infty) = 1$.
%\end{fact}

This yields a $(1\pm \eps / 4) \cdot (1\pm \eps / 10)$-approximate estimate of $R^{\text{Eff}}_{G^{(t-1)}}(u^{(t)},v^{(t)})$, which can be converted in $O(1)$ time to to a $(1\pm \eps / 4)(1\pm \eps / 10) \leq (1 \pm \eps / 2)$-approximate estimate $\hat{\tau}^{(t)}$ of $\tau^{(t)}$, as desired.

\paragraph{Analysis.} We need two simple, but clever insights to show that the algorithm performs in the way intended; both have been observed in previous work. We start with the insight that small leverage score mass changes cannot affect the spectral approximation guarantee by much. This almost immediately yields the proof of \Cref{inv:mainInvariant}.

\begin{fact}[Small leverage mass implies small spectral change]
\label{lem:leverage-mass}
Let $\MM\succeq 0$, and $\bDelta=\sum_{i\in I}\aa_i \aa_i^\top$
where every $a_i\in\im(\MM)$. If $\sum_{i\in I}\aa_i^\top \MM^\dagger \aa_i\leq\eta$, then $\Delta\preceq\eta \MM$. Consequently,
\[
    (1-\eta)\MM\preceq \MM-\bDelta\preceq \MM.
\]
\end{fact}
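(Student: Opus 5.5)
The plan is to reduce the claim to a statement about a single rank-one term and then sum. First I would show that for each $i\in I$ we have $\aa_i\aa_i^\top \preceq (\aa_i^\top \MM^\dagger \aa_i)\,\MM$. Summing over $i$ then gives
\[
\bDelta \preceq \Big(\sum_{i\in I}\aa_i^\top\MM^\dagger\aa_i\Big)\MM \preceq \eta\,\MM,
\]
where the last step uses $\MM\succeq 0$. The consequence follows at once. From $\bDelta\preceq\eta\MM$ we get $\MM-\bDelta\succeq(1-\eta)\MM$. From $\bDelta\succeq 0$, since it is a sum of outer products, we get $\MM-\bDelta\preceq\MM$.

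For the rank-one inequality, I need to show that for every $\xx$,
\[
(\aa^\top\xx)^2\le (\aa^\top\MM^\dagger\aa)\,(\xx^\top\MM\xx).
\]
\begin{itemize}
\item I use $\aa\in\im(\MM)$, which gives $\aa=\MM\MM^\dagger\aa$.
\item Write $\MM^{1/2}$ for the PSD square root and $(\MM^{\dagger})^{1/2}=(\MM^{1/2})^\dagger$. Since $\MM\MM^\dagger$ is the orthogonal projection onto $\im(\MM)$, it equals $\MM^{1/2}(\MM^{1/2})^\dagger$. Hence
\[
\aa^\top\xx=\big((\MM^{1/2})^\dagger\aa\big)^\top\big(\MM^{1/2}\xx\big).
\]
\item Cauchy--Schwarz on this inner product gives $(\aa^\top\xx)^2\le \|(\MM^{1/2})^\dagger\aa\|^2\,\|\MM^{1/2}\xx\|^2$.
\item The two norms are exactly $\aa^\top\MM^\dagger\aa$ and $\xx^\top\MM\xx$, which is the desired bound.
\end{itemize}

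The only delicate step is justifying $\aa=\MM^{1/2}(\MM^{1/2})^\dagger\aa$, that is, that the projection identity holds for the square-root factorization. I would settle it by diagonalizing $\MM=\sum_j\lambda_j\mathbf{u}_j\mathbf{u}_j^\top$. On $\im(\MM)$ both $\MM\MM^\dagger$ and $\MM^{1/2}(\MM^{1/2})^\dagger$ act as the identity, because each equals $\sum_{\lambda_j>0}\mathbf{u}_j\mathbf{u}_j^\top$. This is also where the hypothesis $\aa_i\in\im(\MM)$ is essential. Without it, $\xx$ could be a kernel vector of $\MM$ with $\aa^\top\xx\neq0$, and then the inequality fails.
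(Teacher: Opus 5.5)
Your proof is correct and complete. The paper states this as a Fact and gives no proof of it, so there is no argument in the paper to compare against.

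What you give is the standard per-term argument. You use the hypothesis $\aa_i\in\im(\MM)$ exactly where it is needed, namely to write $\aa=\MM^{1/2}(\MM^{1/2})^\dagger\aa$. You also explain correctly why that hypothesis cannot be dropped.

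The other common route handles the whole sum at once. Because every $\aa_i\in\im(\MM)$, one has $\bDelta=\MM^{1/2}\mathbf{N}\MM^{1/2}$ with $\mathbf{N}=(\MM^{1/2})^\dagger\bDelta(\MM^{1/2})^\dagger\succeq 0$. Hence $\bDelta\preceq\lambda_{\max}(\mathbf{N})\,\MM$, and $\lambda_{\max}(\mathbf{N})\le\mathrm{tr}(\mathbf{N})=\sum_{i\in I}\aa_i^\top\MM^\dagger\aa_i\le\eta$.

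That version exposes a sharper intermediate quantity, $\lambda_{\max}(\mathbf{N})$, which can be far below the total leverage mass. For example, it equals $\max_i\aa_i^\top\MM^\dagger\aa_i$ when the $\aa_i$ are orthogonal in the $\MM^\dagger$-inner product. Your Cauchy--Schwarz argument is more elementary but pays the full sum from the outset. For the Fact as stated, both give exactly $\eta$.
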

\begin{proof}[Proof of \Cref{inv:mainInvariant}.]
At the end of the for-loop, we have $\sum_{t' < t'' \leq t} \hat{\tau}^{(t'')} \leq \eps / 20$ since otherwise the if-condition was satisfied, and $t'$ was reset to $t$, yielding an empty sum at the end of the loop. As we have $(1\pm \epsilon /2)$-approximations $\hat{\tau}^{(t)}$ of $\tau^{(t)}$, this implies that 
$$\sum_{t' < t'' \leq t} {\tau}^{(t'')} \leq \frac{1}{1-\epsilon/2}\sum_{t' < t'' \leq t} \hat{\tau}^{(t'')} \leq \frac{1}{1 - \epsilon / 2} \eps / 20 \leq (1+\epsilon) \epsilon / 20 \leq \epsilon / 10.$$ We further recall that $\tau^{(t'')} = \ww_{e^{(t'')}} \cdot \bb_{e^{(t'')}}^{\top}\LL_{G^{(t'')}}^{\dagger}\bb_{e^{(t'')}} = (\sqrt{\ww_{e^{(t'')}}} \cdot \bb_{e^{(t'')}})^{\top}\LL_{G^{(t'')}}^{\dagger}(\sqrt{\ww_{e^{(t'')}}} \cdot \bb_{e^{(t'')}})$ where $\bb_{e^{(t'')}}$ is the weighted incidence vector of the edge $e^{(t'')}$ and $\LL_{G^{(t'')}}^{\dagger}$ denotes the pseudo-inverse of the Laplacian matrix associated with $G^{(t'')}$. Thus, matching the format of the above fact. We will next show that $G^{(t')} \approx_{\epsilon / 10} G^{(t)}$. We proceed by a case distinction.

For incremental graphs, $G^{(t)}$ spectrally dominates each previous graph, i.e., $\LL_{G^{(t')}} \preceq \LL_{G^{(t'')}}$ for every $t' \leq t''$. Using that $A \preceq B$ implies $B^{\dagger} \preceq A^{\dagger}$, we thus have
\[
\sum_{t' < t'' \leq t} \ww_{e^{(t'')}} \cdot  \bb_{e^{(t'')}}\LL_{G^{(t)}}^{\dagger}\bb_{e^{(t'')}} \leq
\sum_{t' < t'' \leq t} \ww_{e^{(t'')}} \cdot \bb_{e^{(t'')}}\LL_{G^{(t'')}}^{\dagger}\bb_{e^{(t'')}} = \sum_{t' < t'' \leq t} {\tau}^{(t'')} \leq \epsilon / 10.
\]
Thus, by \Cref{lem:leverage-mass}, $\LL_{G^{(t')}} = \LL_{G^{(t)}} - \sum_{t' < t'' \leq t} \ww_{e^{(t'')}} \cdot  \bb_{e^{(t'')}}\bb_{e^{(t'')}} \succeq (1 - \epsilon / 10) \LL_{G^{(t)}}$. This allows us to conclude that $G^{(t')} \approx_{\epsilon /10} G^{(t)}$

For decremental graphs, we use a similar argument. Since now $\LL_{G^{(t')}} \succeq \LL_{G^{(t'')}}$ for every $t' \leq t''$, we have that
\[
\sum_{t' < t'' \leq t} \ww_{e^{(t'')}} \cdot  \bb_{e^{(t'')}}\LL_{G^{(t')}}^{\dagger}\bb_{e^{(t'')}} \leq
\sum_{t' < t'' \leq t}  \ww_{e^{(t'')}} \cdot  \bb_{e^{(t'')}}\LL_{G^{(t''-1)}}^{\dagger}\bb_{e^{(t'')}} = \sum_{t' < t'' \leq t} {\tau}^{(t'')} \leq \epsilon / 10.
\]
Thus, again by \Cref{lem:leverage-mass}, $\LL_{G^{{(t)}}} = \LL_{G^{(t')}} - \sum_{t' < t'' \leq t}  \ww_{e^{(t'')}} \cdot \bb_{e^{(t'')}}\bb_{e^{(t'')}} \succeq (1 - \epsilon / 10) \LL_{G^{(t')}}$. We can conclude that also in the decremental case, $G^{(t')} \approx_{\epsilon /10} G^{(t)}$.

Now, by the guarantees of \Cref{thm:implicitSampler}, we have that $H^{(t)} = H^{(t')} \approx_{\epsilon / 10} G^{(t')}$. As we have also just shown that $G^{(t')} \approx_{\epsilon / 10} G^{(t)}$, we can conclude that $H^{(t)}$ is  a $(1 \pm \epsilon / 10) \cdot (1 \pm \epsilon / 10) \leq (1 \pm \epsilon / 4)$ spectral approximation of $G^{(t)}$, as desired.

\end{proof}

Finally, let us establish stability. 

\begin{claim}[Stability]\label{clm:stability}
The number of times that $H$ is re-initialized is $O(n \log n/\eps)$. 
\end{claim}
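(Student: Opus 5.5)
Each re-initialization at time $t$ certifies that the epoch $(t',t]$ has estimated leverage mass $\sum_{t'<t''\le t}\hat{\tau}^{(t'')} > \eps/20$. Since $\hat{\tau}^{(t'')} \le (1+\eps/2)\tau^{(t'')}$, the true mass of the epoch exceeds $\eps/40$. The epochs ending in a re-initialization are disjoint intervals of update times. Hence the number of re-initializations is at most $40/\eps$ times the total online leverage mass $\sum_{t}\tau^{(t)}$ over the whole update sequence. The claim therefore reduces to showing $\sum_t \tau^{(t)} = O(n\log n)$. We assume polynomially bounded weights and $m=\mathrm{poly}(n)$ updates.

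\textbf{Incremental case.} Here $\tau^{(t)}=\min\{1,w_e\,\bb_e^\top \LL_{G^{(t-1)}}^\dagger \bb_e\}$ is the online leverage score of the inserted edge, and the bound is the standard online leverage score bound of \cite{cohen2016online}. I would prove it via a determinant potential. Fix a regularized matrix $\LL^{(t)}+\delta I$ with $\delta = 1/\mathrm{poly}(n)$ below the smallest nonzero Laplacian eigenvalue.
\begin{itemize}
\item If $\bb_e$ lies in $\im(\LL_{G^{(t-1)}})$, i.e.\ the endpoints are already connected, the matrix determinant lemma gives
\[
\log\det(\LL^{(t)}+\delta I)-\log\det(\LL^{(t-1)}+\delta I)=\log\bigl(1+w_e\,\bb_e^\top(\LL^{(t-1)}+\delta I)^{-1}\bb_e\bigr)\ge \Omega(\tau^{(t)}),
\]
using $\log(1+x)\ge \min\{x,1\}/2$ and that the regularized resistance is at least a constant times the true one when $\delta$ is small.
\item If the insertion merges two components, then $\tau^{(t)}=1$. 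This happens at most $n-1$ times.
\end{itemize}
The total increase of $\log\det$ from $\delta^{n}$ to $(\mathrm{poly}(n))^n$ is $O(n\log n)$, which gives the bound.

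\textbf{Decremental case.} Here $\tau^{(t)}$ is measured in $G^{(t-1)}$, the graph before deleting $e^{(t)}$. I would reverse time: read backwards, the deletion sequence is an insertion sequence $G^{(m)}\subset\dots\subset G^{(0)}$, and $\tau^{(t)}$ is the leverage score of $e^{(t)}$ in the graph \emph{after} inserting it. For these the bound $w_e r_e' = x/(1+x)$ with $x$ the pre-insertion leverage gives
\[
\frac{x}{1+x}\le \log(1+x),
\]
so the same determinant telescoping applies. Edges that are bridges at deletion time have $\tau=1$; in reversed time they join components, so there are at most $n-1$ of them. This yields the reverse online leverage bound of \cite{braverman2020near}.

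The main obstacle is the regularization bookkeeping: showing that the determinant range is $O(n\log n)$ in the log and that $\delta$ perturbs resistances only by a constant factor. I would pick $\delta$ below $\lambda_{\min}^{+}\ge 1/\mathrm{poly}(n)$, which holds by the polynomial weight bound, and handle the disconnected components separately via the $\tau=1$ count.
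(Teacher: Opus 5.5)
Your proposal is correct and follows the paper's argument. Each re-initialization closes a disjoint epoch carrying $\Omega(\eps)$ leverage mass, and this is compared against a total online (incremental) or reverse-online (decremental) leverage mass of $O(n\log n)$; the only difference is that you prove this total bound yourself via the regularized $\log\det$ potential (using Sherman--Morrison and $x/(1+x)\le\log(1+x)$ in reversed time, with the at most $n-1$ component-merging steps counted separately), whereas the paper simply cites \cite{cohen2016online} and \cite{braverman2020near}.
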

\begin{proof}
By \cite{cohen2016online}, we have that $\sum_{t} \tau^{(t)} = O(n \log n)$ for $G$ being incremental, and since this fact also applies to the sequence invoked backwards, it immediately extends to the decremental setting as well, which was also observed in the streaming context in \cite{braverman2020near}. Since $\hat{\tau}^{(t)}$ are $(1\pm \eps / 2)$-approximate estimates, we also have $\sum_{t} \hat{\tau}^{(t)} = O(n \log n)$.

But we only rebuild at time $t$ if $\sum_{t' < t'' \leq t} \hat{\tau}^{(t'')} > \eps / 20$ and since these periods are disjoint, we can have at most $O(n \log n /\eps)$ rebuilds.
\end{proof}

Combined, \Cref{inv:mainInvariant}, \Cref{clm:stability} and \Cref{thm:implicitSampler} then yield \Cref{thm:stableSparsifier}.

\newpage
\printbibliography

@InProceedings{cohen2016online,
  author =	{Cohen, Michael B. and Musco, Cameron and Pachocki, Jakub},
  title =	{{Online Row Sampling}},
  booktitle =	{Approximation, Randomization, and Combinatorial Optimization. Algorithms and Techniques (APPROX/RANDOM 2016)},
  pages =	{7:1--7:18},
  series =	{Leibniz International Proceedings in Informatics (LIPIcs)},
  ISBN =	{978-3-95977-018-7},
  ISSN =	{1868-8969},
  year =	{2016},
  volume =	{60},
  editor =	{Jansen, Klaus and Mathieu, Claire and Rolim, Jos\'{e} D. P. and Umans, Chris},
  publisher =	{Schloss Dagstuhl -- Leibniz-Zentrum f{\"u}r Informatik},
  address =	{Dagstuhl, Germany},
  URL =		{https://drops.dagstuhl.de/entities/document/10.4230/LIPIcs.APPROX-RANDOM.2016.7},
  URN =		{urn:nbn:de:0030-drops-66304},
  doi =		{10.4230/LIPIcs.APPROX-RANDOM.2016.7}
}

@inproceedings{saranurak2019expander,
  title={Expander decomposition and pruning: Faster, stronger, and simpler},
  author={Saranurak, Thatchaphol and Wang, Di},
  booktitle={Proceedings of the Thirtieth Annual ACM-SIAM Symposium on Discrete Algorithms},
  pages={2616--2635},
  year={2019},
  organization={SIAM}
}

@InProceedings{bernstein_et_al:LIPIcs.ICALP.2022.20,
  author =	{Bernstein, Aaron and van den Brand, Jan and Probst Gutenberg, Maximilian and Nanongkai, Danupon and Saranurak, Thatchaphol and Sidford, Aaron and Sun, He},
  title =	{{Fully-Dynamic Graph Sparsifiers Against an Adaptive Adversary}},
  booktitle =	{49th International Colloquium on Automata, Languages, and Programming (ICALP 2022)},
  pages =	{20:1--20:20},
  series =	{Leibniz International Proceedings in Informatics (LIPIcs)},
  ISBN =	{978-3-95977-235-8},
  ISSN =	{1868-8969},
  year =	{2022},
  volume =	{229},
  editor =	{Boja\'{n}czyk, Miko{\l}aj and Merelli, Emanuela and Woodruff, David P.},
  publisher =	{Schloss Dagstuhl -- Leibniz-Zentrum f{\"u}r Informatik},
  address =	{Dagstuhl, Germany},
  URL =		{https://drops.dagstuhl.de/entities/document/10.4230/LIPIcs.ICALP.2022.20},
  URN =		{urn:nbn:de:0030-drops-163611},
  doi =		{10.4230/LIPIcs.ICALP.2022.20}
}

@article{bernstein2020fully,
  title={Fully-dynamic graph sparsifiers against an adaptive adversary},
  author={Bernstein, Aaron and Brand, Jan van den and Gutenberg, Maximilian Probst and Nanongkai, Danupon and Saranurak, Thatchaphol and Sidford, Aaron and Sun, He},
  journal={arXiv preprint arXiv:2004.08432},
  year={2020}
}

@misc{goranci2026onlinesparsificationalgorithmbook,
      title={An Online Sparsification Algorithm from the Book}, 
      author={Gramoz Goranci and Rasmus Kyng and Maximilian Probst Gutenberg and Yibin Zhao and Gernot Zöcklein},
      year={2026},
      eprint={2607.23098},
      archivePrefix={arXiv},
      primaryClass={cs.DS},
      url={https://arxiv.org/abs/2607.23098}, 
}

@INPROCEEDINGS{AbboudLPS23,
  author={Abboud, Amir and Li, Jason and Panigrahi, Debmalya and Saranurak, Thatchaphol},
  booktitle={2023 IEEE 64th Annual Symposium on Foundations of Computer Science (FOCS)}, 
  title={All-Pairs Max-Flow is no Harder than Single-Pair Max-Flow: Gomory-Hu Trees in Almost-Linear Time}, 
  year={2023},
  volume={},
  number={},
  pages={2204-2212},
  doi={10.1109/FOCS57990.2023.00137}}

@article{AbboudKLPGSYY25,
  title={Deterministic Almost-Linear-Time Gomory-Hu Trees},
  author={Abboud, Amir and Kyng, Rasmus and Li, Jason and Panigrahi, Debmalya and Gutenberg, Maximilian Probst and Saranurak, Thatchaphol and Yuan, Weixuan and Yuan, Wuwei},
  journal={arXiv preprint arXiv:2507.20354},
  year={2025}
}

@article{SpielmanS08,
    author = {Spielman, Daniel and Srivastava, Nikhil},
    title = {Graph Sparsification by Effective Resistances},
    journal = {SIAM Journal on Computing},
    volume = {40},
    number = {6},
    pages = {1913-1926},
    year = {2011},
    doi = {10.1137/080734029},
    url = {http://arxiv.org/abs/0803.0929}
}

@inproceedings{braverman2020near,
  title={Near optimal linear algebra in the online and sliding window models},
  author={Braverman, Vladimir and Drineas, Petros and Musco, Cameron and Musco, Christopher and Upadhyay, Jalaj and Woodruff, David P and Zhou, Samson},
  booktitle={2020 IEEE 61st Annual Symposium on Foundations of Computer Science (FOCS)},
  pages={517--528},
  year={2020},
  organization={IEEE}
}

@inproceedings{goranci2025incremental,
  title={Incremental Approximate Maximum Flow via Residual Graph Sparsification},
  author={Goranci, Gramoz and Henzinger, Monika and R{\"a}cke, Harald and Sricharan, AR},
  booktitle={52nd International Colloquium on Automata, Languages, and Programming (ICALP 2025)},
  pages={91--1},
  year={2025},
  organization={Schloss Dagstuhl--Leibniz-Zentrum f{\"u}r Informatik}
}

@inproceedings{van2024almost,
  title={Almost-linear time algorithms for decremental graphs: Min-cost flow and more via duality},
  author={Van Den Brand, Jan and Chen, Li and Kyng, Rasmus and Liu, Yang P and Meierhans, Simon and Gutenberg, Maximilian Probst and Sachdeva, Sushant},
  booktitle={2024 IEEE 65th Annual Symposium on Foundations of Computer Science (FOCS)},
  pages={2010--2032},
  year={2024},
  organization={IEEE}
}

@inproceedings{chen2024almostlinear,
  author       = {Chen, Li and Kyng, Rasmus and Liu, Yang P. and Meierhans, Simon and Probst Gutenberg, Maximilian},
  title        = {Almost-Linear Time Algorithms for Incremental Graphs: Cycle Detection, SCCs, s-t Shortest Path, and Minimum-Cost Flow},
  booktitle    = {Proceedings of the 56th Annual ACM Symposium on Theory of Computing},
  series       = {STOC '24},
  year         = {2024},
  publisher    = {Association for Computing Machinery},
  address      = {New York, NY, USA},
  doi          = {10.1145/3618260.3649745},
  url          = {https://doi.org/10.1145/3618260.3649745}
}

@misc{chen2022maximum,
  title         = {Maximum Flow and Minimum-Cost Flow in Almost-Linear Time},
  author        = {Li Chen and Rasmus Kyng and Yang P. Liu and Richard Peng and Maximilian Probst Gutenberg and Sushant Sachdeva},
  year          = {2022},
  eprint        = {2203.00671},
  archivePrefix = {arXiv},
  primaryClass  = {cs.DS},
  doi           = {10.48550/arXiv.2203.00671}
}

@inproceedings{peng2016approximate,
  title={Approximate undirected maximum flows in o (m polylog (n)) time},
  author={Peng, Richard},
  booktitle={Proceedings of the twenty-seventh annual ACM-SIAM symposium on Discrete algorithms},
  pages={1862--1867},
  year={2016},
  organization={SIAM}
}

@InProceedings{dahlgaard:LIPIcs.ICALP.2016.48,
  author =	{Dahlgaard, S{\o}ren},
  title =	{{On the Hardness of Partially Dynamic Graph Problems and Connections to Diameter}},
  booktitle =	{43rd International Colloquium on Automata, Languages, and Programming (ICALP 2016)},
  pages =	{48:1--48:14},
  series =	{Leibniz International Proceedings in Informatics (LIPIcs)},
  ISBN =	{978-3-95977-013-2},
  ISSN =	{1868-8969},
  year =	{2016},
  volume =	{55},
  editor =	{Chatzigiannakis, Ioannis and Mitzenmacher, Michael and Rabani, Yuval and Sangiorgi, Davide},
  publisher =	{Schloss Dagstuhl -- Leibniz-Zentrum f{\"u}r Informatik},
  address =	{Dagstuhl, Germany},
  URL =		{https://drops.dagstuhl.de/entities/document/10.4230/LIPIcs.ICALP.2016.48},
  URN =		{urn:nbn:de:0030-drops-63289},
  doi =		{10.4230/LIPIcs.ICALP.2016.48}
}

@inproceedings{goranci2023efficient,
  title={Efficient Data Structures for Incremental Exact and Approximate Maximum Flow},
  author={Goranci, Gramoz and Henzinger, Monika},
  booktitle={50th International Colloquium on Automata, Languages, and Programming, ICALP 2023},
  pages={69},
  year={2023}
}

@inproceedings{van2023deterministic,
  title={A deterministic almost-linear time algorithm for minimum-cost flow},
  author={Van Den Brand, Jan and Chen, Li and Kyng, Rasmus and Liu, Yang P and Peng, Richard and Gutenberg, Maximilian Probst and Sachdeva, Sushant and Sidford, Aaron},
  booktitle={2023 IEEE 64th Annual Symposium on Foundations of Computer Science (FOCS)},
  pages={503--514},
  year={2023},
  organization={IEEE}
}

@article{spielman2014nearly,
  title={Nearly Linear Time Algorithms for Preconditioning and Solving Symmetric, Diagonally Dominant Linear Systems},
  author={Spielman, Daniel A and Teng, Shang-Hua},
  journal={SIAM Journal on Matrix Analysis and Applications},
  volume={35},
  number={3},
  pages={835--885},
  year={2014},
  publisher={Society for Industrial and Applied Mathematics}
}

@inproceedings{li2021approximate,
  title={Approximate gomory--hu tree is faster than n--1 max-flows},
  author={Li, Jason and Panigrahi, Debmalya},
  booktitle={Proceedings of the 53rd Annual ACM SIGACT Symposium on Theory of Computing},
  pages={1738--1748},
  year={2021}
}

@inproceedings{li2025simple,
  title={A Simple and Fast Algorithm for Fair Cuts},
  author={Li, Jason and Li, Owen},
  booktitle={International Conference on Integer Programming and Combinatorial Optimization},
  pages={400--411},
  year={2025},
  organization={Springer}
}

@inproceedings{li2023near,
  title={Near-linear time approximations for cut problems via fair cuts},
  author={Li, Jason and Nanongkai, Danupon and Panigrahi, Debmalya and Saranurak, Thatchaphol},
  booktitle={Proceedings of the 2023 Annual ACM-SIAM Symposium on Discrete Algorithms (SODA)},
  pages={240--275},
  year={2023},
  organization={SIAM}
}

@inproceedings{cohen2014solving,
  title={Solving SDD linear systems in nearly m log1/2 n time},
  author={Cohen, Michael B and Kyng, Rasmus and Miller, Gary L and Pachocki, Jakub W and Peng, Richard and Rao, Anup B and Xu, Shen Chen},
  booktitle={Proceedings of the forty-sixth annual ACM symposium on Theory of computing},
  pages={343--352},
  year={2014}
}

@inproceedings{kyng2016approximate,
  title={Approximate gaussian elimination for laplacians-fast, sparse, and simple},
  author={Kyng, Rasmus and Sachdeva, Sushant},
  booktitle={2016 IEEE 57th Annual Symposium on Foundations of Computer Science (FOCS)},
  pages={573--582},
  year={2016},
  organization={IEEE}
}

@inproceedings{kravchenko2026partially,
  title={Partially-Dynamic Maximum Flow in Dense Graphs},
  author={Kravchenko, Egor and Probst Gutenberg, Maximilian},
  booktitle={53rd International Colloquium on Automata, Languages, and Programming (ICALP 2026)},
  pages={133--1},
  year={2026},
  organization={Schloss Dagstuhl--Leibniz-Zentrum f{\"u}r Informatik}
}

@inproceedings{van2024incremental,
  title={Incremental approximate maximum flow on undirected graphs in subpolynomial update time},
  author={van den Brand, Jan and Chen, Li and Kyng, Rasmus and Liu, Yang P and Peng, Richard and Gutenberg, Maximilian Probst and Sachdeva, Sushant and Sidford, Aaron},
  booktitle={Proceedings of the 2024 Annual ACM-SIAM Symposium on Discrete Algorithms (SODA)},
  pages={2980--2998},
  year={2024},
  organization={SIAM}
}

@inproceedings{chen2020fast,
  title={Fast dynamic cuts, distances and effective resistances via vertex sparsifiers},
  author={Chen, Li and Goranci, Gramoz and Henzinger, Monika and Peng, Richard and Saranurak, Thatchaphol},
  booktitle={2020 IEEE 61st Annual Symposium on Foundations of Computer Science (FOCS)},
  pages={1135--1146},
  year={2020},
  organization={IEEE}
}

@inproceedings{goranci2021expander,
  title={The expander hierarchy and its applications to dynamic graph algorithms},
  author={Goranci, Gramoz and R{\"a}cke, Harald and Saranurak, Thatchaphol and Tan, Zihan},
  booktitle={Proceedings of the 2021 ACM-SIAM Symposium on Discrete Algorithms (SODA)},
  pages={2212--2228},
  year={2021},
  organization={SIAM}
}

@InProceedings{fleischmann_et_al:LIPIcs.ICALP.2026.91,
  author =	{Fleischmann, Henry and Li, George Z. and Li, Jason},
  title =	{{Faster Weak Expander Decompositions and Approximate Max Flow}},
  booktitle =	{53rd International Colloquium on Automata, Languages, and Programming (ICALP 2026)},
  pages =	{91:1--91:20},
  series =	{Leibniz International Proceedings in Informatics (LIPIcs)},
  ISBN =	{978-3-95977-428-4},
  ISSN =	{1868-8969},
  year =	{2026},
  volume =	{374},
  editor =	{Bhattacharya, Sayan and Nanongkai, Danupon and Benedikt, Michael and Puppis, Gabriele},
  publisher =	{Schloss Dagstuhl -- Leibniz-Zentrum f{\"u}r Informatik},
  address =	{Dagstuhl, Germany},
  URL =		{https://drops.dagstuhl.de/entities/document/10.4230/LIPIcs.ICALP.2026.91},
  URN =		{urn:nbn:de:0030-drops-264800},
  doi =		{10.4230/LIPIcs.ICALP.2026.91}
}

\end{document}